\newtheorem{proposition}{Proposition}
\newtheorem{theorem}{Theorem}
\newenvironment{proof}[1][Proof]{\textbf{#1.} }{\ \rule{0.5em}{0.5em}}
\newcommand{\bel}[1]{\begin{equation}\label{#1}}                     
\newcommand{\bal}[1]{\begin{eqnarray}\label{#1}}   
\newcommand{\be}{\begin{equation}}               
\newcommand{\ba}{\begin{eqnarray}}           
\newcommand{\ee}{\end{equation}}
\newcommand{\ea}{\end{eqnarray}}
\newcommand{\abs}[1]{\left| #1 \right|}%Introduced by Yonezawa
\newcommand{\orddif}[2]{\frac{d #1}{d #2}}
\newcommand{\bea}{\begin{equation}}
\newcommand{\eea}{\end{equation}}
\begin{document}
%%%%%%%%    file name   %%%%%%%%%%%%%%%%%%%%
\begin{flushright}
OCU-PHYS 362
\end{flushright}
\title{Equation of State for the One-Dimensional Attractive $\delta$-Potential Bose Gas\\ in the Weak-Coupling Regime}

\author{Tsubasa Ichikawa}
\email[Email:]{ tsubasa@qo.phys.gakushuin.ac.jp}
\altaffiliation[present address:]{ Department of Physics, Gakushuin University, Tokyo 171-8588, Japan}
\affiliation{Research Center for Quantum Computing, Interdisciplinary Graduate School of Science and Engineering, Kinki University, 3-4-1 Kowakae, Higashi-Osaka, 577-8502, Japan}

\author{Izumi Tsutsui}
\email[Email:]{izumi.tsutsui@kek.jp}
%\affiliation{Department of Physics, University of Tokyo, 7-3-1 Hongo, Bunkyo-ku, Tokyo 113-0033, Japan}
\affiliation{Theory Center, Institute of Particle and Nuclear Studies,
High Energy Accelerator Research Organization (KEK), 1-1 Oho, Tsukuba, Ibaraki 305-0801, Japan}

\author{Nobuhiro Yonezawa}
\email[Email:]{yonezawa@sci.osaka-cu.ac.jp}
\affiliation{Osaka City University Advanced Mathematical Institute (OCAMI), 3-3-138, Sugimoto, Sumiyoshi-ku, Osaka, 558-8585, Japan}

\date{\today}
\begin{abstract}
Approximated formulas for real quasimomentum and the associated energy spectrum are presented 
for one-dimensional Bose gas with weak attractive contact interactions.  On the 
basis of the energy spectrum, we obtain the equation of state in the high-temperature
region, which is found to be the van der Waals equation without volume 
correction.
\end{abstract}
\pacs{03.65.-w, 51.30.+i, 67.85.-d, }

\maketitle
%\tableofcontents

%%%%%%%%%%%%%%%%%%%%%%%%%%%%%%
\paragraph*{Introduction.}
%%%%%%%%%%%%%%%%%%%%%%%%%%%%%%

One-dimensional quantum gas has been attracting considerable attention
since its seminal experimental demonstrations by using cold atoms 
\cite{Kinoshita_Wenger_Weiss_2004, Paredes_Widera_Murg_et_al_2004}.
To date, one of the bosonic gas models extensively studied is the Lieb-Liniger model 
\cite{Lieb_Liniger_1963}, which is solvable and characterized by a tunable coupling 
constant.

In the case of repulsive interaction, much literature is found both for the zero-temperature region \cite{Lieb_Liniger_1963, Girardeau_1960_0303,Girardeau_0610016, Oelkers_Batchelor_Bortz_Guan_0511694} and the finite-temperature region \cite{Yang_Yang_1969,Yang_1970,
 1751-8121-44-10-102001, PhysRevLett.100.090402}. 
The Yang-Yang equations \cite{Yang_Yang_1969}, 
 whose solutions describe the thermodynamical behavior of the system, have been demonstrated experimentally \cite{PhysRevLett.100.090402}. Generalization to anyonic systems is also found in \cite{Gutkin_1987_0515, Kundu_9811247, Batchelor_Guan_Kundu_08051770}.
In parallel, the case of attractive interaction 
has been investigated \cite{McGuire_1964, 1998PhRvA..57.3317M, PhysRevLett.98.150403, 2007JSMTE..08...32C, 2011JMP....52l2106P, PhysRevA.67.013608, PhysRevA.68.043619, PhysRevA.76.063620} in the last decade, which uncovered
various exotic  features including bound states with complex quasimomentum \cite{Lieb_Liniger_1963, McGuire_1964, 1998PhRvA..57.3317M, PhysRevLett.98.150403, 2007JSMTE..08...32C, PhysRevA.76.063620}. 
Phase structure with respect to the coupling strength is found in the paradigm of Gross-Pitaevskii mean-field theory \cite{PhysRevA.67.013608, PhysRevA.68.043619}.

In this Brief Report, we present two simple but notable results about the Lieb-Liniger 
model valid for {\it weak} attractive interactions.  One of them is the explicit form of the
quasimomentum and energy spectrum, and the other is the equation of state in the high-temperature region which resembles 
the van der Waals equation of state \cite{v_d_Waals_1873}.

%%%%%%%%%%%%%%%%%%%%%%%%%%%%%%
\paragraph*{Energy Spectrum in the Weak Coupling Limit.}
%%%%%%%%%%%%%%%%%%%%%%%%%%%%%%

The Lieb-Liniger model \cite{Lieb_Liniger_1963} describes an $N$-partite 
bosonic system in one-dimensional space with point interactions.  It is governed by the Hamiltonian
\begin{align}
H =-\frac{\hbar^2}{2m}\sum_{i=1}^N \frac{\partial^2}{\partial x^2_i}
-c \sum_{i < j }\delta(x_i-x_j),
\end{align}
where $m$ is the mass of the particle, $c \ge 0$ is the coupling constant 
of the attractive $\delta$ interaction, and the variable $x_i\in[0,L]$ represents the coordinate 
of the $i$th particle.   The eigenfunctions $\psi$ are 
symmetric $\psi(\cdots, x_i,\cdots,x_j,\cdots)=\psi(\cdots, x_j,\cdots,x_i,\cdots)$ and obey
the periodicity conditions $\psi(\cdots, x_i,\cdots)=\psi(\cdots, x_i+L,\cdots)$ for all $i$.    For convenience we
hereafter work with the unit $\hbar=m=1$.

Following the standard treatment of the model, we adopt the Bethe ansatz,
\begin{align}
\psi(x)=\sum_{ \sigma} a_\sigma \exp \left( i \sum_{i=1}^N k_{ \sigma (i)} x_i \right)
\label{ba}
\end{align}
for the simplex region $0<x_1<x_2<\cdots<x_N<L$. 
Here $k_{i}$ is a quasi-momentum (rapidity) and the summation is over all permutations $ \sigma$ of the particles $i \mapsto  \sigma (i)$ on which the coefficients $a_\sigma$ depend.
Plugging (\ref{ba}) into the Schr\"odinger equation $H\psi=E\psi$, we obtain the  Bethe equations 
\cite{Lieb_Liniger_1963}
\begin{align}
L k_i &= (2 n_i+N-1)\pi
    + 2\sum_{j \neq i} \arctan\delta_{i,j}.
    \label{eq:bethe_eq}
\end{align}
Here, $n_i$ is an integer,  $\delta_{i,j}$ is
%\be
%\theta_{i,j}= 2\arctan\delta_{i,j}
%\ee
%in terms of 
the rescaled relative rapidity defined by
\be
\delta_{i,j}=(k_i-k_j)/c
\ee
for $c \ne 0$, and  the arctangent takes the principal value, $\abs{\arctan\delta_{i,j}}\le\pi/2$.
The energy $E$ then reads $E = \sum_i k_i^2/2$.
 
In what follows, we consider the integer set $\{n_i\}$ so that
$
k_i\in\mathbb{R}
$
for all $i$ for the (dimensionless) weak coupling,
\be
\epsilon:=cL/\pi\ll1.
\label{ke}
\ee
%with $L$ fixed.
We can confine ourselves to the case of the ordering, 
\be
k_1<k_2<\cdots<k_N,
\label{ordering}
\ee
by relabeling the indices $i$ appropriately.
Note that (\ref{ordering}) implies $\delta_{i,j}>0$ for $i>j$.  
We also assume that in the weak regime $k_i$ are all 
regular with respect to $c$. 
 
Two remarks are in order.  First, there actually exist $\{n_i\}$ for few-partite systems 
on which the condition (\ref{ke}) is satisfied. 
For example, it is shown in \cite{Lieb_Liniger_1963} that $k_i\in\mathbb{R}$ 
for bipartite systems for $\epsilon<4/\pi$. For tripartite systems, the same 
holds in the weak-coupling regime if the system has nonzero  
$\{n_i\}$ at the noninteracting limit $c \to 0$ \cite{1998PhRvA..57.3317M}.
Second, under the regularity assumption the ordering (\ref{ordering}) implies no level crossing in the weak regime.

We now show the following proposition.
%%%%%%%%%%%%%%%%%%%%%%%%%%%%%%
\begin{proposition}
\label{edn}
For $\epsilon \ll1$, %with $L$ fixed,
the rescaled relative rapidity and the coupling have the tradeoff relation,
\begin{align}
\epsilon\delta_{i,j}&=2( \bar{n}_i-\bar{n}_j)+{\cal O}(\epsilon)
\label{kd}
\end{align}
with
\be
\bar n_i:=n_i+i-1.
\label{knbar}
\ee
\end{proposition}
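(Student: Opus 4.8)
The plan is to derive the relation directly from the Bethe equations (\ref{eq:bethe_eq}) by subtraction. First I would write the equation for index $i$ and the one for index $j$ and subtract them, so that the common term $(N-1)\pi$ drops out, leaving
\be
L(k_i-k_j)=2(n_i-n_j)\pi+2\Bigl(\sum_{l\neq i}\arctan\delta_{i,l}-\sum_{l\neq j}\arctan\delta_{j,l}\Bigr).
\ee
The point is that the left-hand side is nothing but $\pi\,\epsilon\,\delta_{i,j}$, since the coupling cancels in $\epsilon\delta_{i,j}=(cL/\pi)(k_i-k_j)/c=L(k_i-k_j)/\pi$. Thus the whole claim reduces to evaluating the two arctangent sums to leading order in $\epsilon$.

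The key step is the weak-coupling behaviour of each arctangent. Because $\delta_{i,l}=(k_i-k_l)/c$ and $c=\pi\epsilon/L\to0$, while the regularity assumption keeps each difference $k_i-k_l$ bounded away from zero (no level crossing in the weak regime), every $\delta_{i,l}$ diverges with a definite sign fixed by the ordering (\ref{ordering}). I would then use $\arctan\delta=\mathrm{sgn}(\delta)\,\pi/2-\arctan(1/\delta)$ together with $\arctan(1/\delta_{i,l})=c/(k_i-k_l)+\mathcal{O}(\epsilon^{3})=\mathcal{O}(\epsilon)$, so each term equals $\pm\pi/2+\mathcal{O}(\epsilon)$. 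The sign is $+$ for the $i-1$ indices $l<i$ (where $k_l<k_i$) and $-$ for the $N-i$ indices $l>i$, giving $\sum_{l\neq i}\arctan\delta_{i,l}=(\pi/2)(2i-1-N)+\mathcal{O}(\epsilon)$, and likewise for $j$.

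Substituting these two sums into the subtracted equation, the $N$-dependent pieces cancel and the difference of arctangent sums collapses to $\pi(i-j)+\mathcal{O}(\epsilon)$. Dividing by $\pi$ then yields $\epsilon\delta_{i,j}=2(n_i-n_j)+2(i-j)+\mathcal{O}(\epsilon)$, which is exactly $2(\bar n_i-\bar n_j)+\mathcal{O}(\epsilon)$ by the definition (\ref{knbar}) of $\bar n_i$.

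I expect the main obstacle to lie in the error control rather than the leading-order counting. One must verify that $k_i-k_l$ genuinely stays $\mathcal{O}(1)$, bounded away from zero uniformly as $\epsilon\to0$, so that $1/\delta_{i,l}=\mathcal{O}(\epsilon)$ is legitimate and the finitely many $\mathcal{O}(\epsilon)$ corrections (there are $N-1$ per sum, $N$ being fixed) can be collected without a hidden blow-up. This is precisely what the regularity assumption on the $k_i$ and the accompanying no-level-crossing remark are designed to supply, and I would lean on them to close the estimate.
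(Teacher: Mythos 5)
Your argument is correct and is essentially the paper's own proof in slightly reorganized form: both rest on the observation that each $\arctan\delta_{i,l}$ equals $\pm\pi/2+\mathcal{O}(\epsilon)$ with the sign dictated by the ordering (\ref{ordering}), combined with the exact identity $\epsilon\delta_{i,j}=L(k_i-k_j)/\pi$, and both lean on the regularity/no-level-crossing assumption at the same point (to guarantee $|\delta_{i,l}|\to\infty$, i.e.\ that the free-limit values of the $k_i$ remain distinct). The only difference is bookkeeping: you subtract the two Bethe equations first and control the error through the expansion $\arctan x=\mathrm{sgn}(x)\,\pi/2-\arctan(1/x)$, whereas the paper first derives the individual approximation $k_i=\frac{2\pi}{L}\bar n_i+\frac{1}{L}\mathcal{O}(\epsilon)$ from the regularity assumption and then takes the difference.
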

%%%%%%%%%%%%%%%%%%%%%%%%%%%%%%
\begin{proof}
To show this, we first note that
for $c\to0$, 
\be
\arctan\delta_{i,j}\to\begin{cases}
\pi/2      & \text{for $i>j$}, \\
-\pi/2      & \text{for $i<j$}.
\end{cases}
\ee
Substituting this to the Bethe equations (\ref{eq:bethe_eq}), we find 
\be
k_i\to\frac{2\pi}{L}\bar n_i,
\ee
with $\bar n_i$ given by  (\ref{knbar}).
This implies that the number $\bar n_i$ is nothing but the quantum number of the free case $c=0$. 
Since (\ref{ordering}) holds for $c=0$, for the weak coupling regime the regularity condition assures that  
\be
\bar n_1<\bar n_2<\cdots<\bar n_N,
\label{nbar}
\ee
and also that
\be
k_i=\frac{2\pi}{L}\bar n_i+\frac{1}{L}{\cal O}(\epsilon),
\ee
on account of ${\cal O}(c)=(1/L){\cal O}(\epsilon)$ from (\ref{ke}).
We then obtain
\begin{align}
\epsilon\delta_{i+1.i}&=\frac{L}{\pi} (k_{i+1} -k_i)=2( \bar{n}_i-\bar{n}_j)+{\cal O}(\epsilon),
\label{lkk}
\end{align}
which completes the proof.
\end{proof}

The tradeoff relation (\ref{kd}) implies
$
|\delta_{i,j}| = 2(\bar n_i-\bar n_j)/\epsilon+{\cal O}(1)\gg1
%\label{dk1}
$
or
$
1/\abs{\delta_{i,j}}={\cal O}(\epsilon),
$
which is useful to approximate $k_i$.   Indeed, with 
\begin{align}
\arctan x=\begin{cases}
-1/x+\pi/2+{\cal O}(1/x^3),      & x>1, \\
-1/x-\pi/2+{\cal O}(1/x^3),      & x<-1,
\end{cases}
\end{align}
substituted for Eq.~(\ref{eq:bethe_eq}), we find
\begin{align}
k_i = \frac{2\pi}{L}\bar{n}_i-\frac{2}{L} \sum_{j \neq i}^N \frac{1}{\delta_{i,j}} +  
\frac{1}{L}\mathcal{O}( \epsilon^3).
\label{eq:pre_approx_spectrum}
\end{align}
Utilizing (\ref{kd}) again, we can eliminate 
$\delta_{i,j}$ from (\ref{eq:pre_approx_spectrum}) to arrive at the following theorem.
%%%%%%%%%%%%%%%%%%%%%%%%%%%%%%
\begin{theorem}
\label{kln}
For $\epsilon\ll1$, the rapidity is approximated by
\begin{align}
k_i =\frac{2\pi}{L}\bar{n}_i- \frac{\epsilon}{L}
\sum_{j \neq i}^N \frac{1}{\bar{n}_i-\bar{n}_j}+\frac{1}{L}{\cal O}(\epsilon^2).
  \label{leq}
\end{align}
\end{theorem}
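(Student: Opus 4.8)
The plan is to combine the two results already in hand: the tradeoff relation (\ref{kd}) of Proposition~\ref{edn} and the arctangent-expanded rapidity (\ref{eq:pre_approx_spectrum}). The content of the theorem is precisely the elimination of the unknown $\delta_{i,j}$ from (\ref{eq:pre_approx_spectrum}) in favor of the integer differences $\bar n_i-\bar n_j$, trading the $\mathcal{O}(\epsilon^3)$ accuracy of (\ref{eq:pre_approx_spectrum}) for the weaker $\mathcal{O}(\epsilon^2)$ quoted in (\ref{leq}).

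First I would invert the tradeoff relation. Writing $\epsilon\delta_{i,j}=2(\bar n_i-\bar n_j)+\mathcal{O}(\epsilon)$ and expanding the denominator geometrically gives
\begin{align}
\frac{1}{\delta_{i,j}}=\frac{\epsilon}{2(\bar n_i-\bar n_j)}+\mathcal{O}(\epsilon^2).
\end{align}
The expansion is legitimate because the strict ordering (\ref{nbar}) ensures $\bar n_i-\bar n_j\neq0$ for $i\neq j$, so the leading denominator is bounded away from zero and no small divisor spoils the estimate. Substituting into (\ref{eq:pre_approx_spectrum}) then reproduces the leading correction $-(\epsilon/L)\sum_{j\neq i}1/(\bar n_i-\bar n_j)$ exactly as stated; since $N$ is fixed and finite, the $N-1$ residual terms of order $\epsilon^2$ collect into a single $\mathcal{O}(\epsilon^2)$, which also absorbs the $\mathcal{O}(\epsilon^3)$ tail of (\ref{eq:pre_approx_spectrum}). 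This yields (\ref{leq}).

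The step deserving the most attention is the inversion: one must check that the $\mathcal{O}(\epsilon)$ term inside the denominator never becomes comparable to $2(\bar n_i-\bar n_j)$. This is exactly where the regularity assumption and the strict inequalities (\ref{nbar}) do the work, bounding every denominator uniformly away from zero so that the order estimates hold term by term. Everything else is a direct substitution followed by a routine bookkeeping of powers of $\epsilon$ over a finite sum, so I do not anticipate any genuine obstacle beyond this point.
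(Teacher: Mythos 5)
Your proposal is correct and follows essentially the same route as the paper: the paper likewise obtains Theorem~\ref{kln} by substituting the inverted tradeoff relation (\ref{kd}), i.e. $1/\delta_{i,j}=\epsilon/\bigl(2(\bar n_i-\bar n_j)\bigr)+\mathcal{O}(\epsilon^2)$, into (\ref{eq:pre_approx_spectrum}) and absorbing the residual terms into a single $\mathcal{O}(\epsilon^2)$ over the finite sum. Your explicit justification that the integer differences $\bar n_i-\bar n_j$ are bounded away from zero (by the strict ordering (\ref{nbar})) fills in the step the paper leaves implicit in its phrase ``utilizing (\ref{kd}) again, we can eliminate $\delta_{i,j}$.''
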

%%%%%%%%%%%%%%%%%%%%%%%%%%%%%%
This is one of the announced results of this paper.

Let us now derive the approximated energy spectrum $E$ to the order of ${\cal O}(\epsilon)$.
Summing up all the squared rapidities (\ref{leq}), we obtain
\begin{align}
%\begin{split}
E_{\bar{n}_1,\bar{n}_2,\ldots,\bar{n}_N}&=\sum_{i=1}^N\frac{k_i^2}{2}\nonumber\\
 &\approx \sum_{i=1}^N\frac{2\pi^2 \bar{n}_i^2 }{ L^2}
    - 2\pi\frac{\epsilon}{L^2}
       \sum_{i=1}^N\sum_{j \neq i} \frac{\bar{n}_i}{\bar{n}_i-\bar{n}_j} \nonumber\\
&= \sum_{i=1}^N\frac{2\pi^2 \bar{n}_i^2 }{ L^2}
    -c\frac{N(N-1)}{L},
%\end{split}
\label{energy}
\end{align}
where we have used
\begin{align}
%\begin{split}
2 \sum_{i=1}^N \sum_{j \neq i} \frac{\bar n_i}{\bar n_i-\bar n_j}&=
\sum_{i=1}^N \sum_{j \neq i} \frac{\bar n_i}{\bar n_i-\bar n_j}+
\sum_{j=1}^N \sum_{i \neq j} \frac{\bar n_j}{\bar n_j-\bar n_i}\nonumber\\
&=\sum_{i=1}^N \sum_{j \neq i} \frac{\bar n_i- \bar n_j}{\bar n_i-\bar n_j}\nonumber\\
&= N(N-1).
%\end{split}
\end{align}
%Note that Eq.~(\ref{energy}) is valid if $\bar n_i\neq\bar n_j$ for any pair of $i, j$,
%since we can realize the ordering (\ref{nbar}) by relabeling $\bar n_i$.

The energy spectrum (\ref{energy}) is additive for the particles, and
we may rewrite it as
$
E_{\bar n_1, \bar n_2,\cdots\bar n_N}\approx\sum_{i=1}^NE_{\bar n_i}
$
with
\be
E_{\bar n_i}=\frac{2\pi^2}{L^2}\bar n_i^2-c\frac{N-1}{L}.
\label{eq:separated_spectrum}
\ee
The system thus behaves as an assembly of  non-interacting particles 
of the energy $E_{\bar n}$, consisting of 
the kinetic part $2\pi^2 \bar{n}^2/L^2$ and the averaged potential part $-c(N-1)/L$
which is proportional to the number density $(N-1)/L$. 
%Note that the particles behave like fermions under this approximation, 
%because they cannot occupy the same level $\bar n_1<\bar n_2<\cdots<\bar n_N$.

%%%%%%%%%%%%%%%%%%%%%%%%%%%%%%
\paragraph*{Equation of State.}
%%%%%%%%%%%%%%%%%%%%%%%%%%%%%%

Let us consider the equation of state of the system for the high-temperature region:
\begin{align}
\beta:= 1/k_{\rm B}T \ll L^2/2\pi^2,
\label{highT}
\end{align}
where $k_{\rm B}$ is the Boltzmann constant and $T$ is the temperature.
Under the Maxwell-Boltzmann distribution, 
the partition function $Z$ becomes
\begin{align}
%\begin{split}
Z%&=\sum_{\bar{n}_1,\bar{n}_2,\cdots,\bar{n}_N}
    %{\rm e}^{-\beta E_{\bar{n}_1,\bar{n}_2,\cdots,\bar{n}_N}}\nonumber\\
    &\approx \int_{[-\infty,\infty]^N} \prod_{i=1}^N  d p_i \,  e^{-\beta E_{p_1,p_2,\ldots,p_N}}\nonumber\\
&\approx
L^N\left(2 \pi\beta\right)^{-N/2}\exp\frac{\beta c N^2}{L}.
%\end{split}
\end{align}
To obtain this, we approximated the sum over  
$\{\bar n_i\}$ by the integral over $p_i := 2\pi \bar n_i/L$, ignoring the case of complex 
rapidities  ({\it i.e.}, $p_i=p_j$ for some $i,j$) which is measure zero.

%%%%%%%%%%%%%%%%%%%%%%%%%%%%%%
\begin{figure}[t]
%\begin{center}
	\includegraphics[width=3in]{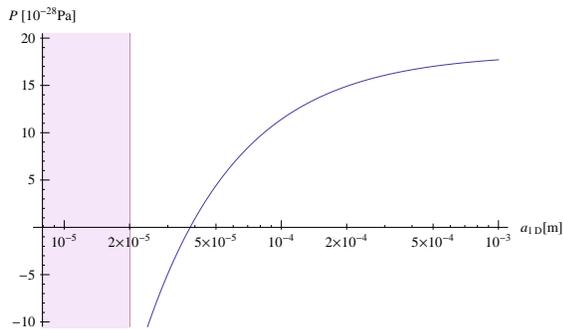}
	%\qquad
	%\includegraphics[width=3in]{PTL1000.eps}
%\end{center}    
    %\vspace{2mm}
    %\includegraphics{Pa.eps}
   \caption{(Color online)
   Pressure of one-dimensional Cs gas as a function of the one-dimensional
   scattering length $a_{\rm1D}$, evaluated for $N=25$, $L=30\,\mu\text{m}$, $T=0.16\,\text{nK}$.   For lower values of $a_{\rm1D}$, 
   we find that $P$ may become negative, which occurs before reaching the shaded range $a_{\rm1D}<2\times10\,\text{$\mu$m}$ where the weak coupling condition 
   (\ref{ke}) is no longer valid.
   }
   \label{PTa}
 \end{figure}
%%%%%%%%%%%%%%%%%%%%%%%%%%%%%%

Since the volume of the system is given by $L$ in our one dimensional model, the pressure $P$ is evaluated as
$
P=\frac{1}{\beta}\orddif{}{L}\ln Z. %\approx \frac{N}{\beta L}-c\frac{N(N-1)}{L^2},
$
We then obtain the following theorem.
%%%%%%%%%%%%%%%%%%%%%%%%%%%%%%
\begin{theorem}
\label{vdW}
The equation of state in the high-temperature region reads
\begin{align}
\left(P+c\frac{N^2}{L^2}\right)L   \approx Nk_{\rm B}T.
\label{VdW}
\end{align}
\end{theorem}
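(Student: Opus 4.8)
The plan is to obtain $P$ by a direct differentiation of the closed form of $Z$ and then rearrange into the advertised shape. First I would take the logarithm of the partition function, which splits into three pieces according to its three factors: the volume factor $L^N$ contributes $N\ln L$, the Gaussian prefactor $(2\pi\beta)^{-N/2}$ contributes the $L$-independent constant $-(N/2)\ln(2\pi\beta)$, and the interaction factor $\exp(\beta c N^2/L)$ contributes $\beta c N^2/L$. Hence $\ln Z \approx N\ln L - (N/2)\ln(2\pi\beta) + \beta c N^2/L$.

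Next I would differentiate with respect to $L$. The constant middle term drops out, leaving $\orddif{}{L}\ln Z = N/L - \beta c N^2/L^2$. Substituting this into the thermodynamic relation $P = (1/\beta)\orddif{}{L}\ln Z$ and using $1/\beta = k_{\rm B}T$ from the definition (\ref{highT}) gives $P = N k_{\rm B}T/L - c N^2/L^2$. Finally, transposing the interaction term to the left-hand side and multiplying through by $L$ yields $(P + c N^2/L^2)\,L \approx N k_{\rm B}T$, which is (\ref{VdW}).

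The main obstacle here is conceptual rather than computational, since the differentiation itself is elementary. One must be satisfied that differentiating the \emph{already approximated} $Z$ (rather than the exact partition function) is legitimate to the working order in $\epsilon$, and that the $\approx$ in the statement faithfully inherits the two approximations entering $Z$: the truncation of the energy spectrum (\ref{eq:separated_spectrum}) at $\mathcal{O}(\epsilon)$, and the replacement of the discrete sum over $\{\bar{n}_i\}$ by the Gaussian integral under the high-temperature condition (\ref{highT}). It is worth remarking that the attractive sign of $c$ makes the $cN^2/L^2$ term play the role of the internal-pressure correction $a/V^2$ of a genuine van der Waals gas, while the $L$-independence of the kinetic sector explains the absence of any excluded-volume ($-Nb$) correction, exactly as anticipated in the abstract.
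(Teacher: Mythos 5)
Your proposal is correct and follows exactly the paper's route: the paper derives (\ref{VdW}) precisely by inserting the approximated partition function $Z \approx L^N(2\pi\beta)^{-N/2}\exp(\beta c N^2/L)$ into $P=\frac{1}{\beta}\orddif{}{L}\ln Z$ and rearranging, just as you do. Your closing remarks on the legitimacy of differentiating the approximated $Z$ and on the physical interpretation are sensible additions but do not change the argument.
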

%%%%%%%%%%%%%%%%%%%%%%%%%%%%%%
This is in fact the van der Waals equation without the volume correction.   The absence of the  volume correction is understood by the fact that the Lieb-Liniger model
adopts only point particles.  

%%%%%%%%%%%%%%%%%%%%%%%%%%%%%%
%\paragraph*{Example.}
%%%%%%%%%%%%%%%%%%%%%%%%%%%%%%

The one-dimensional system of bosons with attractive interactions of our interest has actually been realized as the 
super-Tonks-Girardeau (sTG) phase of an ultracold gas of cesium (Cs) atoms
using the confinement-induced resonance \cite{Haller09}.  In this experiment,    
one observes a multiple of elongated tubes, each consisting of approximately 25 atoms along the size of $3\times10\,\mu$m on average.
To analyze this in our context, we take these numbers for $N$ and $L$, respectively, with the mass
$m\approx2.2\times10^{-25}\,\text{kg}$.  From (\ref{highT}), we then find that our equation of state (\ref{VdW}) is valid for 
$T\gg2\pi^2\hbar^2/mL^2k_{\rm B}\approx8\times10\,\text{pK}$.   
On the other hand, according to \cite{Olshanii98} our coupling $c$ is related to the  tunable 
one-dimensional scattering length $a_{\rm 1D}>0$ by $c=2\hbar^2/ma_{\rm 1D}$.   Thus our weak coupling condition (\ref{ke}) 
implies $a_{\rm 1D}\gg2L/\pi \approx2\times10\,\mu\text{m}$.
In contrast, we find $a_{\rm 1D}\approx3\times10^2\,{\rm nm}$ in the experiment \cite{Haller09}, which is way out of the valid range.

If we are allowed to extend the range by tuning the parameters of experiment such as the strength of applied magnetic field appropriately without spoiling the sTG phase up to
the valid range, then our formula suggests an intriguing possibility.    This is seen in the behavior of pressure $P$ (see 
Fig.~\ref{PTa}) which becomes negative for lower values of $a_{\rm 1D}$.  This takes place when the pressure correction term surmounts the ideal gas term in (\ref{VdW}), suggesting a phase transition there.
This may also be realized, {\it e.g.}, by increasing the number density $N/L$ even if $a_{\rm 1D}$ is fixed.

%%%%%%%%%%%%%%%%%%%%%%%%%%%%%%
\paragraph*{Conclusion and Discussions.}
%%%%%%%%%%%%%%%%%%%%%%%%%%%%%%

In this paper we obtained the approximated energy spectrum of the one-dimensional
attractive Bose gas where particles interact weakly with each other via the contact $\delta$ potential. 
The coupling $\epsilon$ and the rescaled 
relative rapidity $\delta_{i,j}$ fulfill the tradeoff relation (Proposition \ref{edn}). 
As a result, we obtained the closed expression of the approximated rapidity (Theorem \ref{kln}).
Based on this, we found that the equation of state is the van der Waals equation without the volume correction (Theorem \ref{vdW}).  

The induced correction term is proportional to the attractive coupling $c$ with the squared number density, which is  precisely the same as the standard 
correction.  This correction is conventionally 
ascribed to the average interparticle potentials in the classical thermodynamics derivation, whereas in our quantum-mechanical model the effect is taken care of by the nontrivial boundary conditions of the $\delta$ potential imposed at the positions of the particles.     The valid range of our analysis does not seem to overlap with 
the parameter range of the present experiment with ultracold Cs gas, but if it can be extended we may observe a novel phase transition as a consequence of the quantum boundary effect.

In closing, we mention that the $\delta$ potential is not the only possible point interactions admitted in quantum mechanics.    Indeed, it has been known that in one dimension 
we have a U(2) family of point interactions each characterized by distinct 
boundary conditions \cite{AGHH} including the $\delta$ potential as a special case.   Despite that all of them are equally zero-range, these potentials can in principle give rise to different spectra through the non-trivial 
boundary conditions, bearing a variety of palpable physical effects as pointed out in \cite{Fulop_Miyazaki_Tsutsui_0309095,Fulop_Tsutsui_0612011,Fulop_Tsutsui_0906_2626,Yonezawa_0904_1134}.
We may therefore expect similar exotic outcomes to appear, for instance, in the equation of state when such potentials other than the $\delta$ are considered.

\begin{acknowledgments}
T.I. was supported by \lq\lq Open Research Center\rq\rq~Project for
 Private Universities, matching fund subsidy, MEXT, Japan.
The research of N.Y. was supported in part by a Grant-in-Aid for Scientific Research (Grant No. 2054278) as well as JSPS Bilateral Joint Projects (JSPS-RFBR collaboration) from the MEXT.
N.Y. was also partially supported by the JSPS Institutional Program for
Young Researcher Overseas Visits
\lq\lq Promoting international young researchers in mathematics and
mathematical sciences led by OCAMI.\rq\rq
\end{acknowledgments}


\begin{thebibliography}{10}

\bibitem{Kinoshita_Wenger_Weiss_2004}
T.~Kinoshita, T.~Wenger, and D.~S. Weiss,
\newblock Science {\bf 305}, 1125 (2004).

\bibitem{Paredes_Widera_Murg_et_al_2004}
B. Paredes, A. Widera, V. Murg, O. Mandel, S. F\"olling, I. Cirac, G. V. Shlyapnikov, T. W. H\"ansch, and I. Bloch,
\newblock Nature (London) {\bf 429}, 277 (2004).

\bibitem{Lieb_Liniger_1963}
E.~H. Lieb and W.~Liniger,
\newblock Phys. Rev. {\bf 130}, 1605 (1963).

\bibitem{Girardeau_1960_0303}
M.~D. Girardeau,
\newblock J. Math. Phys. {\bf 1}, 516 (1960).

\bibitem{Girardeau_0610016}
M.~Girardeau,
\newblock Phys. Rev. Lett. {\bf 97}, 100402 (2006).

\bibitem{Oelkers_Batchelor_Bortz_Guan_0511694}
N.~Oelkers, M.~Batchelor, M.~Bortz, and X.~Guan,
\newblock J. Phys. \textbf{A}: Math. Gen. {\bf 39}, 1073 (2006).

\bibitem{Yang_Yang_1969}
C.~N. Yang and C.~P. Yang,
\newblock J. Math. Phys. {\bf 10}, 1115 (1969).

\bibitem{Yang_1970}
C.~P. Yang,
\newblock Phys. Rev. {A} {\bf 2}, 154 (1970).

\bibitem{1751-8121-44-10-102001}
X.-W. Guan and M.~T. Batchelor,
\newblock J. Phys. {\bf A}: Math. Theor. {\bf 44}, 102001 (2011).

\bibitem{PhysRevLett.100.090402}
A.~H. van Amerongen, J.~J.~P. van Es, P.~Wicke, K.~V. Kheruntsyan, and N.~J.
  van Druten,
\newblock Phys. Rev. Lett. {\bf 100}, 090402 (2008).

\bibitem{Gutkin_1987_0515}
E.~Gutkin,
\newblock Ann. Phys. (N.Y.) {\bf 176}, 22 (1987).

\bibitem{Kundu_9811247}
A.~Kundu,
\newblock Phys. Rev. Lett. {\bf 83}, 1275 (1999).

\bibitem{Batchelor_Guan_Kundu_08051770}
M.~Batchelor, X.~Guan, and A.~Kundu,
\newblock J. Phys. A: Math. Theor. {\bf 41}, 352002 (2008).

\bibitem{McGuire_1964}
J.~B. Mcguire,
\newblock J. Math. Phys. {\bf 5}, 622 (1964).

\bibitem{1998PhRvA..57.3317M}
J.~G. {Muga} and R.~F. {Snider},
\newblock Phys. Rev. {A} {\bf 57}, 3317 (1998).

\bibitem{PhysRevLett.98.150403}
P.~Calabrese and J.-S. Caux,
\newblock Phys. Rev. Lett. {\bf 98}, 150403 (2007).

\bibitem{2007JSMTE..08...32C}
P.~{Calabrese} and J.-S. {Caux},
\newblock J. Stat. Mech.: Theor. Exp., P08032 (2007).

\bibitem{2011JMP....52l2106P}
S.~{Prolhac} and H.~{Spohn},
\newblock J. Math. Phys. {\bf 52}, 122106 (2011).

\bibitem{PhysRevA.67.013608}
R.~Kanamoto, H.~Saito, and M.~Ueda,
\newblock Phys. Rev. A {\bf 67}, 013608 (2003).

\bibitem{PhysRevA.68.043619}
R.~Kanamoto, H.~Saito, and M.~Ueda,
\newblock Phys. Rev. {A} {\bf 68}, 043619 (2003).

\bibitem{PhysRevA.76.063620}
A.~G. Sykes, P.~D. Drummond, and M.~J. Davis,
\newblock Phys. Rev. {A} {\bf 76}, 063620 (2007).

\bibitem{v_d_Waals_1873}
J. D. van der Waals, Ph.D. thesis, Leiden University, Leiden, 1873.
%\newblock {\em Over de Continuiteit van den Gasmen Vloeistoftoestand},
%\newblock PhD thesis, Leiden, 1873.

\bibitem{Haller09}
E.~Haller {\it et al.},
\newblock Science {\bf 325}, 1224 (2009).

%\bibitem{Haller10}
%E.~Haller {\it et. al},
%\newblock Phys. Rev. Lett. {\bf 104}, 153203 (2010).

%\bibitem{Cornish00}
%S.~L. Cornish, N.~R. Claussen, J.~L. Roberts, E.~A. Cornell, and C.~E. Wieman,
%\newblock Phys. Rev. Lett. {\bf 85}, 1795 (2000).

%\bibitem{NB}
%The one-dimensional scattering length $a_{\rm 1D}$ is given as a function of the three dimensional
%scattering length $a_{\rm 3D}$,  and transversal trapping frequency $\omega_\perp$ \cite{Olshanii98}.
%We evaluate $a_{\rm 1D}$ by using the value $\omega_\perp=2\pi\times0.84\times14.2$ kHz given in 
%\cite{Haller10}. Further varying $a_{\rm 3D}>0$, we obtain $a_{\rm 1D}<42$nm.
%
\bibitem{Olshanii98}
M.~Olshanii,
\newblock Phys. Rev. Lett. {\bf 81}, 938 (1998).

%\bibitem{Papp08}
%S.~B.~Papp, J.~M.~Pino, and C.~E.~Wieman,
%\newblock Phys. Rev. Lett. {\bf 101}, 040402 (2008).

\bibitem{AGHH}
S.~Albeverio, F.~Gesztesy, R.~H$\o$egh-Krohn, and H.~Holden,
\newblock {\em Solvable Model in Quantum Mechanics} (Springer-Verlag, Berlin, 1988).

\bibitem{Fulop_Miyazaki_Tsutsui_0309095}
T.~F\"{u}l\"{o}p, H.~Miyazaki, and I.~Tsutsui,
\newblock Mod. Phys. Lett. \textbf{A} {\bf 18}, 2863 (2003).

\bibitem{Fulop_Tsutsui_0612011}
T.~F\"{u}l\"{o}p and I.~Tsutsui,
\newblock J. Phys. \textbf{A}: Math. Theor. {\bf 40}, 4585 (2007).

\bibitem{Fulop_Tsutsui_0906_2626}
T.~F\"{u}l\"{o}p and I.~Tsutsui,
\newblock J. Phys. \textbf{A}: Math. Theor. {\bf 42}, 475301 (2009).

\bibitem{Yonezawa_0904_1134}
N.~Yonezawa,
\newblock Prog. Theor. Phys. {\bf 123}, 35 (2010).

\end{thebibliography}
\end{document}